\newcommand{\hrefo}[1]{\href{http://oeis.org/#1}{#1}}
\newcommand{\citeo}[1]{\cite[\hrefo{#1}]{OEIS}}
\theoremstyle{plain}
\newtheorem{theorem}{Theorem}
\newtheorem{statement}[theorem]{Statement}
\title{Volterra chain and Catalan numbers}
\author{V.\,E.\:Adler$^+$\/\thanks{e-mail: adler@itp.ac.ru}, A.\,B.\:Shabat$^+$}
\address{$^+$L.\,D.\:Landau Institute for Theoretical Physics RAS, 142432 Chernogolovka, Russia}
\abstract{We consider the Cauchy problem for the Volterra chain with an initial condition equal to 0 in one node and 1 in the others. It is shown that this problem admits an exact solution in terms of the Bessel functions. The Taylor series arising here are related to the exponential generating function for Catalan numbers.}
\begin{document}
\maketitle

\section{Generating function}\label{s:det}

Let us consider the Volterra chain
\begin{equation}\label{ut}
 u'_n=u_n(u_{n+1}-u_{n-1}),
\end{equation}
where $u_n=u_n(t)$ and the prime denotes the derivative with respect to $t$. By differentiating equations, one can express the higher derivatives $u^{(k)}_n$ as polynomials in $u_{n-k},\dots,u_{n+k}$. This makes possible to compute derivatives of any order by the initial data $u_n(0)$ and construct a solution of the Cauchy problem for (\ref{ut}) in the form of Taylor series. If the initial data is bounded, then it is easy to obtain estimates for the coefficients of these series, from which it follows that the radius of convergence is greater than 0 \cite{Kulaev_Shabat_2018}. As a rule, this radius is not very large, and the calculation of derivatives is a rather laborious procedure. However, there is at least one interesting case in which the coefficients of the Taylor series can be found explicitly and, thus, an exact solution of the chain can be obtained. Continuing the study started in \cite{Kulaev_Shabat_2018}, we consider special step-like initial conditions on the half-line $n\ge0$:
\begin{equation}\label{u011}
 u_0(0)=0,\quad u_1(0)=u_2(0)=\dots=1.
\end{equation}
In this case, a connection with several results from combinatorics is found: it turns out that the series under consideration are expressed in terms of the exponential generating function for Catalan numbers. In turn, this function is expressed through the Bessel function of the imaginary argument \citeo{A000108}. As a result, we obtain a solution  of the problem (\ref{ut}), (\ref{u011}) in terms of the Wronskians of this function and its derivatives. In combinatorics, this corresponds to the Hankel transform \cite{Layman_2001} which maps the Catalan numbers into unital sequence. It should be noted that the connection of Catalan numbers with integrable systems has been repeatedly noted, for example, in the theory of the dispersionless Toda chain \cite{Kodama_Pierce_2009}.

Together with the chain (\ref{ut}) it is convenient to consider the modified chain
\begin{equation}\label{vt}
 v'_n=v^2_n(v_{n+1}-v_{n-1})
\end{equation}
with initial data
\begin{equation}\label{v011}
 v_0(0)=0,\quad v_1(0)=v_2(0)=\dots=1.
\end{equation}
It is easy to verify that the solutions to this problem and the problem (\ref{ut}), (\ref{u011}) are connected by the substitution (discrete Miura transform)
\[
 u_n=v_nv_{n+1},\quad n=0,1,2,\dots
\]
The zero initial condition at $n=0$ implies that $u_0(t)=v_0(t)=0$ for all $t$ and the equations of both chains are naturally restricted to half-line $n>0$. Functions $u_1$ and $v_1$ are related by equation
\begin{equation}\label{u1v1}
 u_1=\frac{v'_1}{v_1}.
\end{equation}
A direct computation of several higher derivatives $v^{(n)}_1(0)$, $n=0,1,2,\dots$, by use of equations (\ref{vt}), (\ref{v011}), yields the Catalan numbers \citeo{A000108}
\[
 1,\,1,\,2,\,5,\,14,\,42,\,132,\,429,\,1430,\,4862,\,16796,\dots
\]
In similar way, for the derivatives $u^{(n)}_1(0)$, the use of equations (\ref{ut}), (\ref{u011}) or equation (\ref{u1v1}) brings to the sequence \citeo{A302197}
\[
 1,\,1,\,1,\,0,\,-4,\,-10,\,15,\,210,\,504,\,-3528,\,-34440,\dots
\]
which corresponds to the Hurwitz logarithm of the Catalan numbers. To substantiate these observations, we will need Wronskian representation of the solution of the chains \cite{Leznov_1980}.

\begin{statement}\label{st:w}
Let $v_0=0$ and $v_1=f(t)$, then solutions of the chains (\ref{vt}) and (\ref{ut}) at $n\ge0$ are of the form
\begin{equation}\label{uvw}
 v_n=\frac{w_{n-3}w_{n-1}}{w^2_{n-2}},~~ 
 u_n=\frac{w_{n-3}w_n}{w_{n-2}w_{n-1}},~ n=0,1,\dots,
\end{equation}
where $w_{-3}=0$, $w_{-2}=w_{-1}=1$ and, at $k\ge0$, 
\begin{gather*} 
 w_{2k}=\left|\begin{matrix}
   f       & f'        & \dots  & f^{(k)}  \\
   f'      & f''       & \dots  & f^{(k+1)}\\
   \vdots  & \vdots    & \ddots &\vdots    \\
   f^{(k)} & f^{(k+1)} & \dots  & f^{(2k)}
 \end{matrix}\right|,\\
 w_{2k+1}=\left|\begin{matrix}
   f'       & f''       & \dots  & f^{(k+1)}  \\
   f''      & f''       & \dots  & f^{(k+2)}\\
   \vdots   & \vdots    & \ddots &\vdots    \\
   f^{(k+1)}& f^{(k+2)} & \dots  & f^{(2k+1)}
 \end{matrix}\right|.
\end{gather*} 
\end{statement}

Let us compare these formulas with the Hankel transform for Catalan numbers \cite{Layman_2001}. Recall, that in addition to the explicit formula
\[
 c_n=\frac{(2n)!}{n!(n+1)!}=\frac{1}{n+1}\binom{2n}{n},
\]
Catalan numbers can be determined by any of the following recurrence relations:
\begin{gather}
\label{cn.1}
 c_{n+1}=\frac{4n+2}{n+2}c_n,~~ c_0=1,~~ n=0,1,\dots \\
\label{cn.2}
 c_{n+1}=c_0c_n+\dots+c_nc_0,~~ c_0=1,~~ n=0,1,\dots 
\end{gather}
For an arbitrary sequence $a_n$, the Hankel transform is the sequence of principal minors of an infinite matrix with element $(i,j)$ equal to $a_{i+j-1}$. One of the many remarkable properties of sequence $c_n$ is that it goes into the sequence of 1 under this transformation, and the same is true for the sequence with $c_0$ dropped \cite{Layman_2001}. 

\begin{statement}\label{st:Hankel}
The identities hold, at $k\ge0$:
\begin{gather*} 
 H_{2k}=\left|\begin{matrix}
   c_0    & c_1     & \dots  & c_k     \\
   c_1    & c_2     & \dots  & c_{k+1} \\
   \vdots & \vdots  & \ddots &\vdots   \\
   c_k    & c_{k+1} & \dots  & c_{2k}
 \end{matrix}\right|=1,\\
 H_{2k+1}=\left|\begin{matrix}
   c_1    & c_2     & \dots  & c_{k+1} \\
   c_2    & c_3     & \dots  & c_{k+2} \\
   \vdots & \vdots  & \ddots &\vdots   \\
   c_{k+1}& c_{k+2} & \dots  & c_{2k+1}
 \end{matrix}\right|=1.
\end{gather*} 
\end{statement}
\begin{proof}[Proof] 
Let us make use of elementary transformations of determinants taking the relations (\ref{cn.2}) into account. First, we consider $H_{2k}$ and apply the following transform on the rows $A_i=(c_i,c_{i+1},\dots,c_{i+k})$:
\[
 A_i\to A_i-c_{i-1}A_0-\dots-c_0A_{i-1},\quad i=1,\dots,k.
\]
According to (\ref{cn.2}), after this the 0-th column turns into $(1,0,\dots,0)^t$ and the determinant reduce to a determinant of size $k\times k$, with element $(i,j)$ equal to
\[
 c_{i+j}-c_{i-1}c_j-\dots-c_0c_{i+j-1} = c_0c_{i+j-1}+\dots+c_{j-1}c_i,
\]
that is,
\[
 \left|\begin{matrix}
  c_0c_1 & c_0c_2+c^2_1  & \dots & c_0c_k+\dots+c_{k-1}c_1 \\
  c_0c_2 & c_0c_3+c_1c_2 & \dots & c_0c_{k+1}+\dots+c_{k-1}c_2 \\
  \vdots & \vdots &  &\vdots \\
  c_0c_k & c_0c_{k+1}+c_1c_k & \dots & c_0c_{2k-1}+\dots+c_{k-1}c_k
 \end{matrix}\right|.
\]
Taking into account that $c_0=1$ and applying transformations over columns, we arrive at the determinant $H_{2k-1}$. Thus, we prove that $H_{2k}=H_{2k-1}$. Further on, we consider the determinant $H_{2k+1}$. After the transformation of rows
\[
 A_i\to A_i-c_{i-2}A_1-\dots-c_0A_{i-1},\quad i=2,\dots,k+1,
\]
we obtain the determinant of size $(k+1)\times(k+1)$, with element $(i,j)$ equal to (at $i>1$)
\[
 c_{i+j-1}-c_{i-2}c_j-\dots-c_0c_{i+j-2}=c_0c_{i+j-2}+\dots+c_{j-1}c_{i-1}.
\]
We replace, by (\ref{cn.2}), the elements of the first row as $c_{j+1}=c_0c_j+\dots+c_jc_0$ and arrive at
\[
 \left|\begin{matrix}
  c^2_0  & c_0c_1+c_1c_0     & \dots & c_0c_k+\dots+c_kc_0\\
  c_0c_1 & c_0c_2+c^2_1      & \dots & c_0c_{k+1}+\dots+c_kc_1 \\
  c_0c_2 & c_0c_3+c_1c_2     & \dots & c_0c_{k+2}+\dots+c_kc_2 \\
  \vdots & \vdots            &       &\vdots \\
  c_0c_k & c_0c_{k+1}+c_1c_k & \dots & c_0c_{2k}+\dots+c^2_k
 \end{matrix}\right|.
\]
The transformations over columns prove that $H_{2k+1}=H_{2k}$.
\end{proof}

We also need an expression for the exponential generating function of the sequence $c_n$ in terms of the modified Bessel function $I_0(z)$ (or the Bessel function of an imaginary argument), which is defined as a solution to the Cauchy problem
\begin{equation}\label{I0}
 zI''_0+I'_0(z)-zI_0(z)=0,~~ I_0(0)=1,~~ I'_0(0)=0.
\end{equation}

\begin{statement}[{\cite[A000108, K.\,A.\,Penson]{OEIS}}]\label{st:fI0}
The exponential generating function for Catalan numbers is
\begin{multline}\label{fI0}
 f(t)=c_0+c_1t+\dots+\frac{c_n}{n!}t^n+\dots\\ = e^{2t}(I_0(2t)-I'_0(2t)).\qquad
\end{multline}
\end{statement}
\begin{proof}[Proof]
Under the above change, the equation (\ref{I0}) becomes
\begin{equation}\label{f''}
 tf''(t)+2(1-2t)f'(t)-2f(t)=0,~~ f(0)=1.
\end{equation}
Replacing $f(t)$ by its Taylor expansion, we obtain the recurrence relations (\ref{cn.1}), as required.
\end{proof}

Now the statement that the derivatives of $v_1$ at $t=0$ coincide with the Catalan numbers is obtained as a simple consequence. Finally, we get the following result.

\begin{theorem}\label{th:sol}
The solutions of the Cauchy problems (\ref{ut}), (\ref{u011}) and (\ref{vt}), (\ref{v011}) are given by equations (\ref{uvw}) with determinants $w_n$ constructed from function $f(t)=e^{2t}(I_0(2t)-I'_0(2t))$.
\end{theorem}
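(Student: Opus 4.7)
The plan is to assemble the three preceding statements and let them do the work. Statement~\ref{st:w} already guarantees that, once $v_1(t)=f(t)$ is fixed arbitrarily, the Wronskian formulas (\ref{uvw}) produce a solution of both chains (\ref{vt}) and (\ref{ut}) on $n\ge 0$, with $v_0\equiv u_0\equiv 0$ built in from $w_{-3}=0$. So the only thing left to check is that the specific $f(t)=e^{2t}(I_0(2t)-I_0'(2t))$ of the theorem reproduces the initial data (\ref{v011}) and (\ref{u011}).

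For this, I would use Statement~\ref{st:fI0} to identify $f$ as the exponential generating function of the Catalan numbers, so that $f^{(m)}(0)=c_m$ for every $m\ge 0$. Evaluating the determinants $w_{2k}$ and $w_{2k+1}$ at $t=0$ then produces exactly the Hankel matrices of Statement~\ref{st:Hankel}, hence $w_n(0)=1$ for all $n\ge -2$ (using also the conventions $w_{-2}=w_{-1}=1$). Substituting into (\ref{uvw}) at $t=0$ immediately gives $v_n(0)=u_n(0)=1$ for $n\ge 1$ and $v_0(0)=u_0(0)=0$, i.e.\ precisely the initial conditions required; uniqueness of the Cauchy problem with bounded initial data (recalled from \cite{Kulaev_Shabat_2018} in the introduction) then identifies the Wronskian expressions with the sought solutions on the half-line.

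No step here is genuinely difficult --- the substantive content has already been unpacked into Statements~\ref{st:w}, \ref{st:Hankel}, and~\ref{st:fI0}, and the proof of the theorem is essentially their concatenation. The only subtlety worth flagging is the bookkeeping check that the entries of the Wronskian matrices at $t=0$ align with those of the Hankel matrices of Catalan numbers, which is immediate once one notes that the $(i,j)$ entry of $w_{2k}$ specializes to $c_{i+j}$ and that of $w_{2k+1}$ to $c_{i+j+1}$. A minor technical remark, not really an obstacle, is that $w_n(0)=1\ne 0$ together with the analyticity of $f$ ensures the ratios in (\ref{uvw}) are well defined in a neighborhood of $t=0$, which is all one needs for the Taylor-series framework set up at the beginning of the paper.
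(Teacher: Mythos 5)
Your proof is correct and uses exactly the three ingredients the paper intends (Statements \ref{st:w}, \ref{st:Hankel}, \ref{st:fI0}), but you run the implication in the opposite direction from the paper. The paper argues \emph{backwards}: the solution of the Cauchy problem has the Wronskian form (\ref{uvw}) by Statement \ref{st:w}; substituting $t=0$ and using the initial data (\ref{v011}) forces $w_n(0)=1$ for all $n$; these are precisely the Hankel relations $H_n=1$, which determine the Taylor coefficients $f^{(n)}(0)$ \emph{uniquely} (each new relation is linear in the newest coefficient, with cofactor a previous determinant equal to $1$), so by Statement \ref{st:Hankel} one concludes $f^{(n)}(0)=c_n$ and then invokes Statement \ref{st:fI0} to identify $f$. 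You argue \emph{forwards}: take the Bessel function $f$, read off $f^{(n)}(0)=c_n$ from Statement \ref{st:fI0}, get $w_n(0)=1$ from Statement \ref{st:Hankel}, verify that (\ref{uvw}) reproduces the initial data (\ref{u011}), (\ref{v011}), and close with uniqueness of the Cauchy problem. The trade-off is in which small uniqueness fact each version leans on: yours needs the generative reading of Statement \ref{st:w} (an arbitrary $f$ with $v_0=0$ \emph{produces} a solution --- slightly more than the literal wording ``solutions \dots are of the form'', though harmless, since $v_{n+1}$ is determined recursively from $v_0=0$, $v_1=f$ by the chain itself) together with uniqueness of the Cauchy problem, which the Taylor-series framework of the introduction supplies; the paper needs only the characterization direction of Statement \ref{st:w} but must tacitly add that the system $H_n=1$ pins down the sequence. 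Your bookkeeping check that the $(i,j)$ entries of $w_{2k}$ and $w_{2k+1}$ specialize to $c_{i+j}$ and $c_{i+j+1}$ is accurate, and your remark that $w_n(0)=1\ne0$ keeps the ratios in (\ref{uvw}) well defined near $t=0$ makes explicit a point the paper leaves implicit.
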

\begin{proof}[Proof]
We find, by substitution $t=0$ in (\ref{uvw}), that $w_n(0)=1$ for all $n$. These relations are the same as relations $H_n=1$ which define the Catalan numbers uniquely. Therefore, function $v_1=f$ possesses the property that $f^{(n)}(0)=c_n$ and we only need to apply the Statement \ref{st:fI0}.
\end{proof}

\section{Solution asymptotics}\label{s:sol}

The function $f(t)$ is entire and has exponential growth at $t\to+\infty$. In contrast, function
\[
 u_1(t)=\frac{f'(t)}{f(t)}=\frac{I_0(2t)-I_2(2t)}{I_0(2t)-I_1(2t)}
\] 
has a finite limit at $t\to+\infty$ and pole singularities in the complex domain. Representation (\ref{uvw}) can be used for plotting of solution at $t$ greater than the convergence radius of the corresponding Taylor series. The figures \ref{fig:u1-u6},\,\ref{fig:t2} show that at $t\to+\infty$ the variables $u_n(t)$ with odd numbers grow, tending to a constant, and the variables with even numbers tend to 0; at $t\to-\infty$ all varables tend to 0. 

\begin{figure}[b!]
\centerline{\includegraphics[width=80mm]{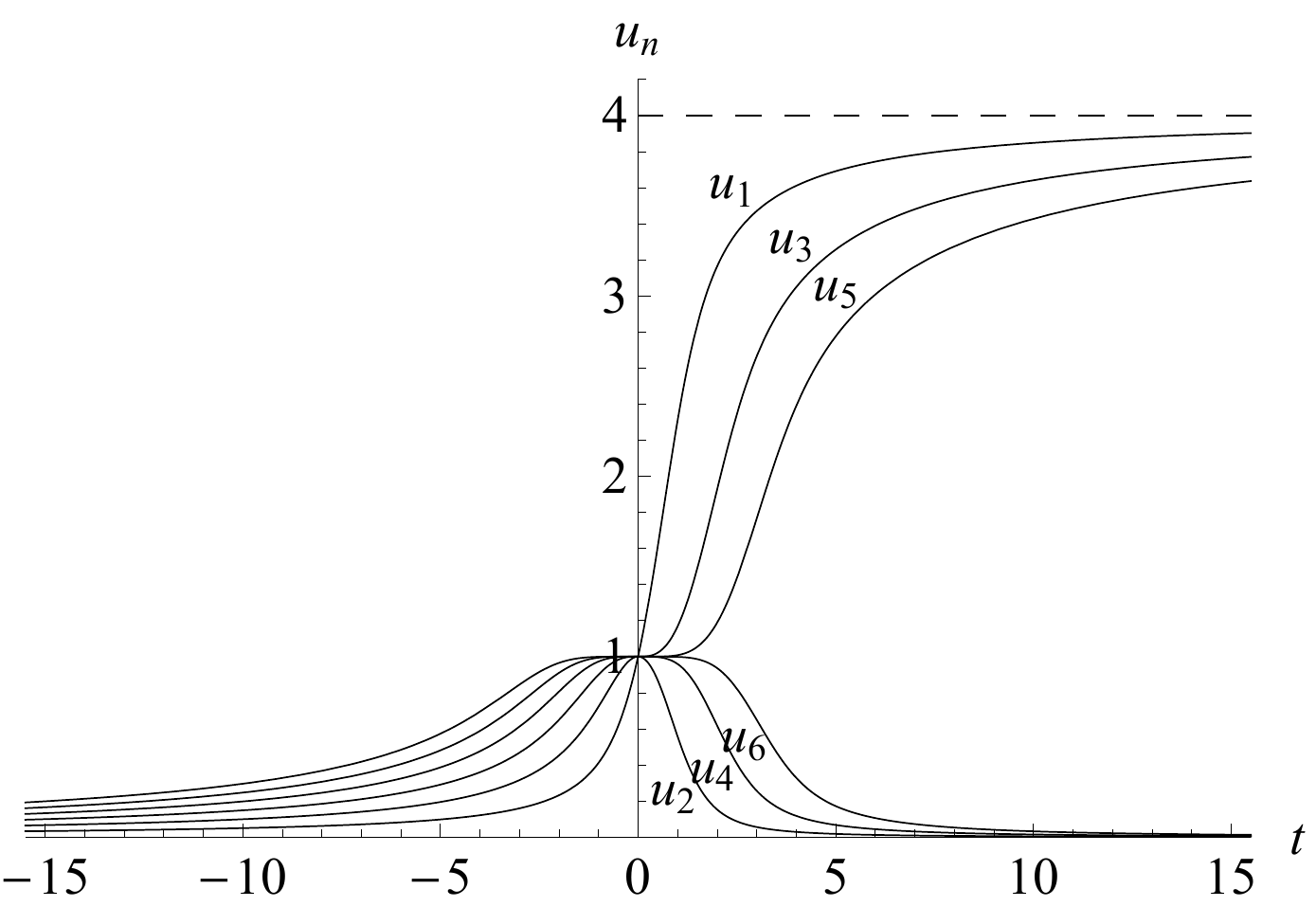}}
\caption{Functions $u_n(t)$}
\label{fig:u1-u6}
\end{figure}

\begin{figure}[b!]
\centerline{\includegraphics[width=80mm]{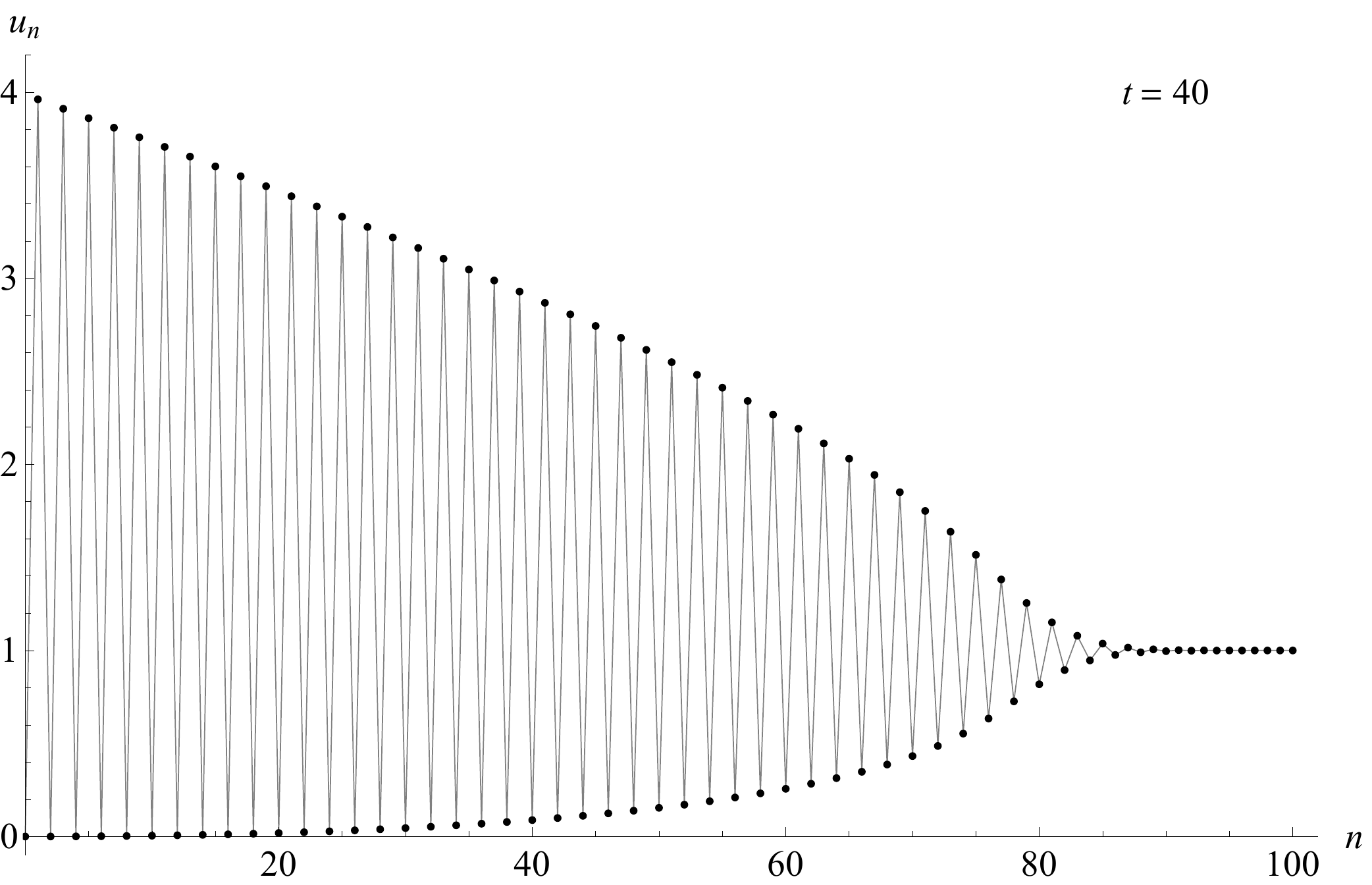}}
\caption{Plot of solution at $t=40$}
\label{fig:t2}
\end{figure}

To derive asymptotic formulas, we use the fact that the function $I_0(z)$ is even and its asymptotic behavior at $z\to+\infty$ is \cite{AS} 
\[
 I_0(z)=\frac{e^z}{\sqrt{2\pi z}}\Bigl(1+\frac{1}{8z}+\frac{9}{16z^2}+\dots\Bigr),
\]
where the coefficients of the series are calculated by substituting into equation (\ref{I0}). By passing to function $v_1(t)=f(t)$, we obtain the exponential growth for positive $t$ and power decrease for negative $t$:
\[
 f(t)=\left\{
  \begin{aligned} 
  & \frac{e^{4t}}{8\sqrt{\pi t^3}}\Bigl(1+\frac{3}{16t}+\frac{45}{512t^2}+\dots\Bigr),
  && t\to+\infty,\\
  & \frac{1}{\sqrt{-\pi t}}\Bigl(1+\frac{1}{16t}-\frac{3}{512t^2}+\dots\Bigr),
  && t\to-\infty.
 \end{aligned}\right.
\]
In both cases, the coefficients of the series in parentheses can be found by substituting into the equation (\ref{f''}). From here it is not difficult to obtain the asymptotic expansions for $u_1(t)$ as well, but an easier way is to transform (\ref{f''}) to Riccati equation
\[
 u'_1+u^2_1-\Bigl(4-\frac{2}{t}\Bigr)u_1-\frac{2}{t}=0.
\] 
Substitution $u_1=a_0+a_1t^{-1}+a_2t^{-2}+\dots$ implies that the free term can take values $a_0=4$, $a_0=0$, the other coefficients are calculated uniquely. Comparison with the formulas for $f$ shows that these two solutions correspond to asymptotics at $t\to\pm\infty$:
\[
 u_1(t)=\left\{
  \begin{aligned} 
  & 4-\frac{3}{2t}-\frac{3}{16t^2}-\frac{9}{64t^3}-\dots,&& t\to+\infty,\\
  &  -\frac{1}{2t}-\frac{1}{16t^2}+\frac{1}{64t^3}+\dots,&& t\to-\infty.
 \end{aligned}\right.
\]
Further on, it follows, directly from the chain equations (\ref{ut}), that for $t\to+\infty$ the variables $u_n$ have asymptotics
\[
 u_{2j}=\frac{\alpha_j}{t^2}+\dots,~~ u_{2j+1}=4-\frac{\beta_j}{t}+\dots,~~ j=0,1,2,\dots
\]
with coefficients obeying recurrence relations
\[
 \alpha_{j+1}-\alpha_j=\frac{\beta_j}{4},~~ \beta_{j+1}-\beta_j=2,~~ 
 \alpha_0=0,~~ \beta_0=\frac{3}{2}.
\]
Similarly, for $t\to-\infty$ we have $u_n=\dfrac{\gamma_n}{t}+\dots$ and
\[
 \gamma_0=0,~~ \gamma_1=-\frac{1}{2},~~
 \gamma_{n+2}-\gamma_n=-1,~~ n=0,1,2,\dots
\]
Solving of these equations proves the following statement. 

\begin{statement}\label{st:uinf}
The asymptotic expansions hold
\[
\left\{\begin{aligned}
 & u_{2j-1}(t)=4-\frac{4j-1}{2t}+O(t^{-2}),\\
 & u_{2j}(t)=\frac{j(2j+1)}{8t^2}+O(t^{-3}),
\end{aligned}\right.\quad t\to+\infty,
\]\[
 u_n(t)=-\frac{n}{2t}+O(t^{-2}),~~ t\to-\infty.
\]
\end{statement}

\section{Further generalizations}\label{s:generalizations}

The studied solution is not exceptional. By experiments with different chains and initial conditions, one can find other examples where the coefficients of Taylor series allow a closed expression or a combinatorial interpretation. We hope that study of this relationship may be fruitful. In conclusion, we give two more examples of this kind.\smallskip

1) Let us consider instead of (\ref{vt}) the modified Narita--Ito--Bogoyavlensky chain of order $m$ \cite{Bogoyavlensky_1991}
\[
 v'_n=v^2_n(v_{n+m}\cdots v_{n+1}-v_{n-1}\cdots v_{n-m}),
\]
with initial data $v_0(0)=0$, $v_n(0)=1$ at $n>0$. As before, in this case the equations are restricted on the half-line $n>0$. Direct calculation of derivatives $v_1$ at $t=0$ brings to the sequences from \cite{OEIS}
\begin{align*}
 m=2:&~ 1,\,1,\,3,\,12,\,55,\,273,\,1428,\,7752,\dots &&\hrefo{A001764}\\
 m=3:&~ 1,\,1,\,4,\,22,\,140,\,969,\,7084,\dots       &&\hrefo{A002293}\\
 m=4:&~ 1,\,1,\,5,\,35,\,285,\,2530,\,23751,\dots     &&\hrefo{A002294}\\
 m=5:&~ 1,\,1,\,6,\,51,\,506,\,5481,\,62832,\dots     &&\hrefo{A002295}
\end{align*}
known as the Pfaff--Fuss--Catalan numbers \cite{Graham_Knuth_Patashnik_1990}, with general formula
\[
 C^{(m+1)}_n= \frac{1}{mn+1}\binom{(m+1)n}{n},\quad n=0,1,2,\dots
\]
($m=1$ corresponds to the Catalan numbers). However, the proof that $v^{(n)}_1(0)=C^{(m+1)}_n$ is not known at the moment, for $m>1$.\smallskip

2) Consider the modified Volterra chain again, but with a linearly growing initial condition:
\[
 v'_n=v^2_n(v_{n+1}-v_{n-1}),\quad v_n(0)=n,\quad n\ge0.
\]
The computation of derivatives $v^{(n)}_1(0)=a_n$ by virtue of the chain brings to the sequence \citeo{A000182}
\[
 1,\,2,\,16,\,272,\,7936,\,353792,\,22368256,\dots
\]
with the property that if we interleave it with zeros, then the exponential generating function will be 
\[
 \tan x = a_0x+a_1\frac{x^3}{3!}+\dots+a_n\frac{x^{2n+1}}{(2n+1)!}+\dots
\]
However, the original sequence grows too fast and the Taylor series $v_1(t)=a_0+a_1t+\dots+a_n\dfrac{t^n}{n!}+\dots$ has the zero convergence radius, that is, the solution is formal in this example.\smallskip 

The research of V.E.A. was supported by RFBR grant 16-01-00289a, the research of A.B.S. was supported by RScF grant 15-11-20007.


\end{document}